\newif\ifLNCS
\newif\ifnotLNCS
\newcommand*{\eqns}[1]{Eq.~#1}
\newcommand*{\eqnsref}[1]{\eqns{\eqref{#1}}} %for equation references
\newcommand{\set}[1]{\left\{ #1 \right\}}
\newcommand{\reals}{\mathbb{R}}
\newcommand*{\realsP}{{\ensuremath{\mathbb{R}_{+}}}}
\newcommand*{\defas}{\ensuremath{\stackrel{\rm \Delta}{=}}}
\renewcommand{\qed}{\hfill $\blacksquare$}
\newtheorem{theorem}{Theorem}
\newtheorem{remark}{Remark}
\newtheorem{proposition}{Proposition}
\newtheorem{example}{Example}
\newcommand{\defined}[1]{\emph{#1}}
\renewenvironment{proof}[0]{{\it Proof. }}{\qed} % A proof
\title{Defence Efficiency}
\author{Gleb Polevoy}
\institute{University of Amsterdam}
\begin{document}

\ifLNCS
\frontmatter          % for the preliminaries
\pagestyle{headings}  % switches on printing of running heads
\addtocmark{Price of Transition} % additional mark in the TOC
\fi
%\begin{titlepage}

\ifLNCS
\pagestyle{headings}  % switches on printing of running heads
\fi
\maketitle 

%\renewcommand{\thepage}{}
%\end{titlepage}
%\pagenumbering{arabic}

%\newpage

\begin{abstract}
In order to automate actions, such as defences against network attacks, one needs to quantify their efficiency.
This can subsequently be used in post-evaluation, learning, etc.
In order to quantify the defence efficiency as a function of the
impact of the defence and its total cost, we present several natural
requirements from such a definition of efficiency and provide a natural
definition that complies with these requirements. Next, we precisely
characterize
our definition of efficiency by the axiomatic approach; namely, we
strengthen the original requirements from such a definition and prove
that the given definition is the unique definition that satisfies those
requirements.
Finally, we generalize the definition to the case of any number of
input variables in two natural ways, and compare these generalizations.
\end{abstract}

\section{Introduction}

%Presenting the general area
Exact definitions and measurements are necessary for conducting science%
~\cite[Page~$1$]{Roberts1984}.
%Zooming in on efficiency
In particular, many actions, such as
defending against network attacks, defragmenting disks,
cleaning streets, and locating viruses,
would benefit from knowing their efficiency.
Knowing efficiency would allow for
improvements, including automatic improvements. We therefore
need to come up with a proper definition of the efficiency of actions.

%The related work and the gap
Since we are not aware of a general definition of efficiency, we
define it in a natural way satisfying natural axioms, such as
monotonicity with respect to certain inputs.
This is a derived measure which inputs constitute a Cartesian
product~~\cite[Chapter~$5$]{Roberts1984}.
A famous problem of deriving measures is the arbitrary choice in
the definition~\cite{Adams1966}, and
there is some work on the existence and uniqueness of a derived
measurement of a certain form~\cite[Section~$2.5$]{Roberts1984}, but
we would like to have stronger uniqueness statements than homomorphisms.
Therefore, we ensure our definition is
the best for our goals by axiomatic characterisation.
%What we do
We begin with an example of defending against a network attack, and then
we suggest a general formula which we characterise by the means of some natural axioms.
Next, we generalise this formula to any number of input parameters
in two possible ways. The first way is an expansion of the basic
formula, and we characterise this expansion by an expended set of axioms.
The second generalisation is a combination of the basic formulas.
We finally explore the relationships between these two natural
generalisations.

%The punchline
To summarise,
we suggest a natural uniquely characterisable way to define efficiency
and study its generalisations, allowing for a wide range of applications.

\section{Network Defence}

Let us consider the scenario of defending against an attack in the
a network. 
Consider a network where a node is under attack, which reduces its
revenue. The implemented countermeasure can either bring upon a
recovery or not.

Denote the \defined{revenue} as a function of time be $r(t)$,
and assume that $r \colon \reals \to \realsP$.
Let the detection of the attack be $t_d$ and the recovery from it be
$t_r$; and let the baseline level be $B$. Then, the \defined{impact}
of an attack is defined as $I \defas \int_{t_d}^{t_r}{( B - r(t) )dt}$.
If no recovery takes place, then $t_r = \infty$, and then,
$I \defas \int_{t_d}^{\infty}{( B - r(t) )dt}$. This improper
integral can either converge or not.

Let the \defined{cost} of implementing a countermeasure be $c(t)$,
assuming that $c \colon \reals \to \realsP$.
Then, we define the \defined{total cost} as
$Ct \defas \int_{t_d}^{t_r}{ c(t) dt}$.
As the case with revenue, if no recovery is achieved, this
integral is improper and it can either converge or not.

In practice, we measure the revenue and the costs of the countermeasure
for a given time bound $T$; in particular, all the integrals are taken
at most till $T$. If no recovery takes place within
this period, we call the case not to have recovered and take the integral till $T$,
instead of infinity. This boundedness of time allows
us work with only proper integrals (the revenue and cost functions are
assumed to be bounded anyhow). Thus, the impact and the total cost are
always finite, regardless whether we consider a recovery to take place.
%We define, therefore, the efficiency to be infinite, if no recovery
%takes place.

%\section{Model}\label{Sec:model}

\section{Single Revenue and Cost}

We now model the situation and define the efficiency of a countermeasure
for the scenario above, though this can be applied to infinitely many
practical situations, where the efficiency decreases in both numerical
inputs.

Let $C$ be an upper bound on the cost
during the period $[0, T]$.
Preparing to define the \defined{efficiency} of a countermeasure, and we
require it to have the following properties:
\begin{enumerate}
	\item Monotonously decreasing with impact $I$, where $I \in [0, B \cdot T]$.
	\item	Monotonously decreasing with total cost $Ct$, where $Ct \in [0, C \cdot T]$.
	\item	If no recovery takes place, the efficiency is always smaller
	than if a recovery does take place, regardless anything else.
	\item	All the values between $0$ and $1$ are obtained, and only they are.
	In the functional notation, efficiency is a function
	$E \colon \set{\text{recovered, not recovered}} \times \realsP \times \realsP \to [0, 1]$.
\end{enumerate}

From the infinitely many definitions of efficiency that fulfill all
the above properties, we propose the following one.
We define the efficiency as
\begin{eqnarray}
E(\text{recovered or not}, I, Ct) \defas
\begin{cases}
\beta + \alpha \frac{B \cdot T - I}{B \cdot T} + (1 - \beta - \alpha) \frac{C  \cdot T - Ct}{C  \cdot T}\\
= 1 - \frac{\alpha}{B \cdot T} I - \frac{1 - \beta - \alpha}{C \cdot T} Ct  & \text{Recovered},\\
\alpha (\frac{\beta}{1 - \beta}) \frac{B \cdot T - I}{B \cdot T} + (1 - \beta - \alpha) (\frac{\beta}{1 - \beta}) \frac{C  \cdot T - Ct}{C  \cdot T}\\
= \beta - \alpha \frac{\beta}{(1 - \beta) (B \cdot T)} I - (1 - \beta - \alpha) \frac{\beta}{(1 - \beta) (C \cdot T)} Ct & \text{otherwise},
\end{cases}
\label{eq:efficiency_def}
\end{eqnarray}
where parameter $\beta$ defines the
division point between recovery and no recovery
(we allocate $\beta$ of the total $[0, 1]$ scale to the case
of no recovery, and the rest is given to the case of recovery),
and parameter $\alpha \in [0, 1 - \beta]$ expresses the relative
importance of the impact w.r.t.\ the total cost.
The idea is
to combine the relative saved revenue $\frac{B \cdot T - I}{B \cdot T}$ with the relative saved cost
$\frac{C  \cdot T - Ct}{C  \cdot T}$,
and shift the recovered case in front of the non-recovered one.
The multiplication by $\frac{\beta}{1 - \beta}$ normalizes the efficiency
of no recovery to fit to $[0, \beta]$.
%to look at the inverse to the impact and to the total cost if a recovery
%takes place, and to strictly monotonically transform this value to 
%$[0, \frac{1}{I \cdot T} + \frac{\beta}{Ct \cdot T})$ otherwise.

The expression $\frac{B \cdot T - I}{B \cdot T}$ can obtain all the values
in $[0, 1]$, as $I$ is in $[0, B T]$. The expression $\frac{C  \cdot T - Ct}{C  \cdot T}$ obtains
all the values in $[0, 1]$, as $Ct \in [0, C \cdot T]$.
Therefore,
the defined efficiency obtains the values in
$[\beta + 0, \beta + (1 - \beta)] = [\beta, 1]$ if a recovery
takes place, and the values in
$[0, \beta]$ otherwise. The continuity of the efficiency function implies that all the values in
these segments are obtained.

Practically, we should take $C$ to be the smallest known upper bound, because a
non-tight bound makes the efficiency seem larger, since $C \cdot T - Ct$
will then never become zero, even for a very costly countermeasure.

The following characterization theorem proves that \eqnsref{eq:efficiency_def}
is the unique definition of efficiency, if we require a stronger set
of properties.
\begin{comment}
If we require being inversely proportional to both the impact and to
the total cost, then there is no appropriate definition, and we cannot
make different degrees of importance to $I$ and $Ct$.
\end{comment}
\begin{theorem}\label{the:effic_charact}
Let $Ct$ obtain values in $[0, C \cdot T]$. Then,
\eqnsref{eq:efficiency_def} is the unique
definition of efficiency that satisfies the following set of
properties:
\begin{enumerate}\label{efficiency_char}
	\item \label{efficiency_char:dec_imp}	Linearly decreasing with impact $I$, where $I \in [0, B \cdot T]$.
	\item	\label{efficiency_char:dec_cost}	Linearly decreasing with total cost $Ct$, where $Ct \in [0, C \cdot T]$.
	\item	\label{efficiency_char:same_lin_coeff}	The ratio of the linear coefficient
	of the impact to the linear coefficient of the total cost is the same,
	regardless whether the recovery takes place or not.
	\item	\label{efficiency_char:range}		If no recovery takes place, all the values between $0$ and $\beta$
	and only they can be obtained; if a recovery does take place, then all
	the values between $\beta$ and $1$ and only they can be obtained.
\end{enumerate}
\end{theorem}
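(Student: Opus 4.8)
The plan is to show that properties~\ref{efficiency_char:dec_imp}--\ref{efficiency_char:range} force the efficiency, on each of its two branches, to be an affine function of $(I, Ct)$ with a completely determined constant term and a completely determined value of the combination $b(B\cdot T)+c(C\cdot T)$ of its coefficients, leaving a single free parameter that turns out to be $\alpha$. First I would read ``linearly decreasing'' in properties~\ref{efficiency_char:dec_imp} and~\ref{efficiency_char:dec_cost} as asserting that each branch has the form $a - b I - c\, Ct$ with $b, c \ge 0$; the phrase ``linear coefficient'' in property~\ref{efficiency_char:same_lin_coeff} confirms that these coefficients are constants, ruling out any cross term $I \cdot Ct$. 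Write the recovered branch as $a_1 - b_1 I - c_1\, Ct$ and the non-recovered branch as $a_2 - b_2 I - c_2\, Ct$.

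Next I would extract the endpoint equations from property~\ref{efficiency_char:range}. An affine function that is non-increasing in each coordinate attains its maximum over the rectangle $[0, B\cdot T]\times[0, C\cdot T]$ at the corner $(0,0)$ and its minimum at $(B\cdot T,\, C\cdot T)$, and being continuous it maps this connected rectangle onto the whole interval between those two extreme values. Hence requiring the recovered branch to realise exactly $[\beta, 1]$ yields $a_1 = 1$ together with $b_1 (B\cdot T) + c_1 (C\cdot T) = 1 - \beta$, while requiring the non-recovered branch to realise exactly $[0, \beta]$ yields $a_2 = \beta$ together with $b_2 (B\cdot T) + c_2 (C\cdot T) = \beta$.

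With the constant terms fixed, I would introduce the remaining degree of freedom by setting $\alpha \defas b_1 (B\cdot T) \in [0, 1-\beta]$, which gives $b_1 = \alpha/(B\cdot T)$ and $c_1 = (1-\beta-\alpha)/(C\cdot T)$, that is, the recovered branch of \eqnsref{eq:efficiency_def}. Property~\ref{efficiency_char:same_lin_coeff} then imposes $b_2/c_2 = b_1/c_1$, and together with $b_2 (B\cdot T) + c_2 (C\cdot T) = \beta$ this is a $2\times 2$ linear system in $(b_2, c_2)$. Solving it yields $c_2 = \beta(1-\beta-\alpha)/\bigl((1-\beta)(C\cdot T)\bigr)$ and $b_2 = \alpha\beta/\bigl((1-\beta)(B\cdot T)\bigr)$, which is exactly the non-recovered branch. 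Back-substitution confirms that the resulting function indeed obeys all four properties, establishing both existence and uniqueness within the family.

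The main obstacle is conceptual rather than computational: since none of the four properties pins down the common coefficient ratio, the object is not a single function but a one-parameter family, and I would make explicit that ``uniqueness'' means the map $\alpha \mapsto E$ is a bijection from $[0, 1-\beta]$ onto the set of efficiency functions obeying properties~\ref{efficiency_char:dec_imp}--\ref{efficiency_char:range}. Care is needed only at the two boundary values $\alpha \in \set{0,\, 1-\beta}$, where one coefficient vanishes and ``decreasing'' degenerates to ``non-increasing'' in that variable; apart from this interpretive point, the corner argument for the extrema and the solution of the small linear system are the only substantive steps.
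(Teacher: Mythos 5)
Your proposal is correct and follows essentially the same route as the paper's proof: write each branch in affine form, use the endpoint substitutions $(I,Ct)=(0,0)$ and $(B\cdot T, C\cdot T)$ to pin down the constant terms and the coefficient sums, set $\alpha \defas b_1 (B\cdot T)$, and then use property~\ref{efficiency_char:same_lin_coeff} to scale the recovered-case coefficients by $\frac{\beta}{1-\beta}$ for the non-recovered branch. Your additions --- the explicit corner/connectedness argument for why the range condition forces exactly those endpoint equations, and the framing of ``uniqueness'' as a bijection between $\alpha \in [0,1-\beta]$ and the admissible functions --- are worthwhile clarifications of points the paper leaves implicit, but they do not change the substance of the argument.
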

We remark that condition~\ref{efficiency_char:same_lin_coeff} implies
that the ratio of the linear coefficient
of the impact to the linear coefficient of the total cost expresses their
relative importance, regardless whether recovery takes place.
\begin{proof}
\eqnsref{eq:efficiency_def} is linearly decreasing with impact and with
total cost and condition~\ref{efficiency_char:same_lin_coeff} holds in a
straight-forward manner.
We have showed after the definition of \eqnsref{eq:efficiency_def} that
condition~\ref{efficiency_char:range} is fulfilled as well. It remains to prove
the other direction.

Let the formula for the case when \emph{a recovery is attained} be
$a - b \cdot I - d \cdot Ct$, for positive $b$ and $d$. This form follows
from conditions~\ref{efficiency_char:dec_imp} and~\ref{efficiency_char:dec_cost}.
For the minimum impact and total cost, $I = Ct = 0$,
we have the maximum possible efficiency of $1$,
implying that $a - b 0 - d 0 = 1 \Rightarrow a = 1$.
For the maximum impact and total cost, $I = B \cdot T$ and
$Ct = C \cdot T$, we have the minimum possible efficiency of $\beta$,
which means that $1 - b \cdot B T - d \cdot C T = \beta \Rightarrow
b B T + d C T = 1 - \beta$. Let $\alpha$ be $b B T$. The nonnegativity of $d C T$
and $b B T + d C T = 1 - \beta$ imply together that $b B T \leq 1 - \beta$, as required
from $\alpha$ in \eqnsref{eq:efficiency_def}. Moreover,
$b B T + d C T = 1 - \beta$ implies that $d C T = 1 - \beta - \alpha$.
To conclude, the efficiency is $1 - b \cdot I - d \cdot Ct$,
where $b = \frac{\alpha}{B T}$ and $d = \frac{1 - \beta - \alpha}{C T}$,
for $\alpha \in [0, 1 - \beta]$, as in \eqnsref{eq:efficiency_def}.

In the case of \emph{no recovery}, let the formula be
$a'- b' \cdot I - d' \cdot Ct$. By substituting $I = Ct = 0$ we
conclude that $a' = \beta$. By substituting $I = B T$ and $Ct = C T$,
we obtain $\beta - b' B T - d' C T = 0$, i.e.~$b' B T + d' C T = \beta$.
From condition~\ref{efficiency_char:same_lin_coeff} we have
\begin{equation*}
\frac{b}{d} = \frac{b'}{d'} \iff \frac{b}{b'}= \frac{d}{d'}.
\end{equation*}
These two equations, together with the proven above equality
$b B T + d C T = 1 - \beta$, imply that each coefficient gets
multiplied by $\frac{\beta}{1 - \beta}$, yielding
$b' = b \frac{\beta}{1 - \beta}$ and $d' = d \frac{\beta}{1 - \beta}$.
Together with the expression above for
$a'$, we obtain \eqnsref{eq:efficiency_def}.
\end{proof}

\section{Generalizations}\label{Sec:model:gen}

The work till now assumed two inputs to the efficiency, besides the
fact whether the system has recovered: the impact and the total cost.
However, in some cases, more input variables are relevant.
For instance, consider the case with multiple revenues.
We denote the $i$th revenue by $r_i(t)$, its baseline level by $B_i$ and the corresponding $i$th impact
$I_i$, i.e.~$I_i \defas \int_{(t_d)_i}^{(t_r)_i}{( B_i - r_i(t) )dt}$, where
$(t_d)_i$ and $(t_r)_i$ are the $i$th detection and recovery time,
respectively.
Further, denote the $i$th cost of a countermeasure by $c_i(t)$, and the
$i$th total cost by $Ct_i$,
i.e.~$Ct_i \defas \int_{(t_d)_i}^{(t_r)_i}{ c_i(t) dt}$.
We can also have various time bounds $T_i$ for the countermeasures for
various revenues, and we say that the system has recovered if all the
revenues have recovered.

In general, we may have various input variables of any nature, which
should have positive or negative influence on the defined efficiency.
We present two different natural generalizations of the work above
to multiple variables. First, we expand~\eqnsref{eq:efficiency_def}
to consist of multiple terms. The second natural option is to simply
combine equations of type \eqnsref{eq:efficiency_def}.

\subsection{Expanding the Equation}\label{Sec:model:gen:exp_eq}

We generalize \eqnsref{eq:efficiency_def} as follows.
\begin{enumerate}
	\item We allow the efficiency decrease in strictly increasing
	functions of possibly multiple factors.
	Such a factor~$x$, where the strictly increasing function is $f$,
	such that $f(0) = 0$, %required only for the increasing factors and even this assumtion can be disposed with by adapting the formula
	appears in the formula as $\frac{f(X) - f(x)}{f(X)}$, where $X$ is an
	upper bound on $x$. This equation obtains all the values from $1$ to
	$0$, when $x$ ranges from $0$ to $X$.
	
	\item	The efficiency may also increase in strictly increasing
	functions of additional, possible multiple, factors.
	Such a factor~$y$, where the strictly increasing function is $f$,
	appears as $\frac{f(y)}{f(Y)}$, where $Y$ is an upper bound on $y$.
	This equation obtains all the values from $0$ to
	$1$, when $y$ ranges from $0$ to $Y$.
\end{enumerate}

The factors w.r.t.~to
increasing functions of which the the dependency is increasing
are denoted as $y_1, \ldots, y_m$, and the factors w.r.t.~to
increasing functions of which the the dependency is decreasing
are denoted as $x_{m + 1}, \ldots, x_{m + l}$. %Let the increasing function
%of $y_i$ be $f_i$ and let the function of $x_j$ be $g_j$.
Then,
$\alpha$ generalizes to $\alpha_1, \alpha_2, \ldots, \alpha_{m + l - 1}$,
describing the importance of the given function of each factor,
and the efficiency is defined as follows (w.l.o.g., we assume here that the
efficiency is decreasing w.r.t.~at least one factor).
\begin{eqnarray}
E(\text{recovered or not}, y_1, \ldots, y_m, x_{m + 1}, \ldots, x_{m + l}) \defas \nonumber\\
\begin{cases}
\beta + \sum_{i = 1}^{m} {\alpha_i \frac{f(y_i)}{f(Y_i)}}\\
+ \sum_{j = m + 1}^{m + l - 1} {\alpha_j \frac{f(X_j) - f(x_j)}{f(X_j)}} 
+ (1 - \beta - \sum_{k = 1}^{m + l - 1}{\alpha_k}) \frac{f(X_{m + l}) - f(x_{m + l})}{f(X_{m + l})}\\
= 1 - \sum_{k = 1}^m {\alpha_k} + \sum_{i = 1}^{m} {\frac{\alpha_i}{f(Y_i)} f(y_i)}\\
- \sum_{j = m + 1}^{m + l - 1} {\frac{\alpha_j}{f(X_j)} f(x_j)} 
- \frac{(1 - \beta - \sum_{k = 1}^{m + l - 1}{\alpha_k}) }{f(X_{m + l})} f(x_{m + l})
  & \text{Recovered},\\

\sum_{i = 1}^{m} {\alpha_i (\frac{\beta}{1 - \beta}) \frac{f(y_i)}{f(Y_i)}}\\
+ \sum_{j = m + 1}^{m + l - 1} {\alpha_j (\frac{\beta}{1 - \beta}) \frac{f(X_j) - f(x_j)}{f(X_j)}} 
+ (1 - \beta - \sum_{k = 1}^{m + l - 1}{\alpha_k}) (\frac{\beta}{1 - \beta}) \frac{f(X_{m + l}) - f(x_{m + l})}{f(X_{m + l})}\\

= \beta - (\frac{\beta}{1 - \beta}) \sum_{k = 1}^m {\alpha_k}
+ \sum_{i = 1}^{m} {\alpha_i (\frac{\beta}{(1 - \beta) f(Y_i)}) f(y_i)} \\
- \sum_{j = m + 1}^{m + l - 1} \alpha_j {(\frac{\beta}{1 - \beta}) \frac{1}{f(X_j)} f(x_j)}
- (1 - \beta - \sum_{k = 1}^{m + l - 1} {\alpha_k}) (\frac{\beta}{1 - \beta}) \frac{1}{f(X_{m + l})} f(x_{m + l})
	& \text{otherwise}.
\end{cases}
\label{eq:efficiency_def:gen}
\end{eqnarray}
As before, $\beta$ defines the division point between recovery and no
recovery. The parameters $\alpha_i$ fulfill
$\sum_{i = 1}^{l + m - 1} {\alpha_i}$ is between $0$ and $1 - \beta$.

Analogously to the basic case (where $m = 0$ and $l = 2$), we can show
that this efficiency fulfills the following conditions.
\begin{enumerate}
	\item	Monotonously increasing with each $f(y_i)$, where $y_i \in [0, Y_i]$.
	\item Monotonously decreasing with each $f(x_j)$, where $x_j \in [0, X_j]$.
	\item	If no recovery takes place, the efficiency is always smaller
	than if a recovery does take place, regardless anything else.
	\item	All the values between $0$ and $1$ are obtained, and only they are.
	In the functional notation, efficiency is a function
	$E \colon \set{\text{recovered, not recovered}} \times \realsP^{m + l} \to [0, 1]$.
\end{enumerate}

We generalize Theorem~\ref{the:effic_charact} as follows.
\begin{theorem}\label{the:effic_charact:gen}
For $i = 1, \ldots, m$, let $y_i$ obtain values in $[0, Y_i]$, and
for $j = m + 1, \ldots, m + l$, let $x_j$ be in $[0, X_j]$. Then,
\eqnsref{eq:efficiency_def:gen} is the unique
definition of efficiency that satisfies the following set of
properties:
\begin{enumerate}\label{efficiency_char:gen}
	\item \label{efficiency_char:gen:inc_y_i}	Linearly increasing with each $f(y_i)$, where $y_i \in [0, Y_i]$.
	\item	\label{efficiency_char:gen:dec_x_j}	Linearly decreasing with each $f(x_j)$, where $x_j \in [0, X_j]$.
	\item	\label{efficiency_char:gen:same_lin_coeff}	The ratio of the linear coefficient
	of the function $f$ of any variable to the linear coefficient of the function $f$ on any other variable is the same,
	regardless whether the recovery takes place or not.
	\item	\label{efficiency_char:gen:range}		If no recovery takes place, all the values between $0$ and $\beta$
	and only they can be obtained; if a recovery does take place, then all
	the values between $\beta$ and $1$ and only they can be obtained.
\end{enumerate}
\end{theorem}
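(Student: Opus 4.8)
The plan is to mirror the proof of Theorem~\ref{the:effic_charact}, treating the recovery and no-recovery cases separately and then linking them through condition~\ref{efficiency_char:gen:same_lin_coeff}. The forward direction---that \eqnsref{eq:efficiency_def:gen} satisfies the four listed properties---is obtained exactly as in the basic case and as sketched immediately before the theorem, so I would dispatch it quickly: linearity and the sign of each coefficient are read off directly; condition~\ref{efficiency_char:gen:same_lin_coeff} holds because passing to the no-recovery case multiplies every coefficient by the common factor $\frac{\beta}{1 - \beta}$; and the range claim follows because each $\frac{f(y_i)}{f(Y_i)}$ and each $\frac{f(X_j) - f(x_j)}{f(X_j)}$ sweeps $[0,1]$, together with continuity of $E$.

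For the converse I would first handle the recovered case. Conditions~\ref{efficiency_char:gen:inc_y_i} and~\ref{efficiency_char:gen:dec_x_j} force the affine form
\[
E = a + \sum_{i=1}^{m} p_i\, f(y_i) - \sum_{j = m+1}^{m+l} q_j\, f(x_j), \qquad p_i, q_j > 0 .
\]
Since $f$ is strictly increasing with $f(0)=0$, the function is increasing in each $f(y_i)$ and decreasing in each $f(x_j)$, so over the box the maximum sits at $y_i = Y_i,\, x_j = 0$ and the minimum at $y_i = 0,\, x_j = X_j$. Condition~\ref{efficiency_char:gen:range} pins these extrema to $1$ and $\beta$, giving $a + \sum_i p_i f(Y_i) = 1$ and $a - \sum_j q_j f(X_j) = \beta$. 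I would then set $\alpha_i \defas p_i f(Y_i)$ for $i \le m$ and $\alpha_j \defas q_j f(X_j)$ for $m+1 \le j \le m+l-1$; substituting back recovers $a = 1 - \sum_{k=1}^m \alpha_k$ and forces the final coefficient to equal $\frac{1 - \beta - \sum_{k=1}^{m+l-1}\alpha_k}{f(X_{m+l})}$, exactly matching \eqnsref{eq:efficiency_def:gen}. Positivity of the coefficients then yields $\alpha_i > 0$ and $\sum_{k=1}^{m+l-1}\alpha_k \le 1 - \beta$, the required parameter range.

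Next I would treat the no-recovery case $E = a' + \sum_i p_i'\, f(y_i) - \sum_j q_j'\, f(x_j)$ identically, obtaining $a' = \sum_j q_j' f(X_j)$ and $\sum_i p_i' f(Y_i) + \sum_j q_j' f(X_j) = \beta$. The crux is condition~\ref{efficiency_char:gen:same_lin_coeff}: equality of every pairwise ratio of coefficients across the two cases means the two coefficient vectors are proportional, $p_i' = \lambda p_i$ and $q_j' = \lambda q_j$ for a single scalar $\lambda$. Subtracting the minimum from the maximum in each case gives $\sum_i p_i f(Y_i) + \sum_j q_j f(X_j) = 1 - \beta$ and its $\lambda$-scaled analogue $= \beta$, whence $\lambda(1 - \beta) = \beta$, so $\lambda = \frac{\beta}{1 - \beta}$. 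Thus every coefficient is scaled by $\frac{\beta}{1-\beta}$ and $a' = \beta - \frac{\beta}{1-\beta}\sum_{k=1}^m \alpha_k$, reproducing the second branch of \eqnsref{eq:efficiency_def:gen}.

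The step I expect to demand the most care is the joint use of monotonicity and condition~\ref{efficiency_char:gen:same_lin_coeff} in the presence of $m + l$ coefficients: I must verify that the function is separately monotone in each coordinate so that the extrema genuinely occur at the two named corners, that the ``only they'' half of condition~\ref{efficiency_char:gen:range} is what forces these extrema to equal exactly $\beta$ and $1$, and that the common-ratio condition collapses to a \emph{single} scalar $\lambda$ rather than separate factors per variable. The remaining manipulations are routine bookkeeping generalizing the two-variable computation of Theorem~\ref{the:effic_charact}.
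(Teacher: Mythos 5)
Your proposal is correct and takes essentially the same approach as the paper: the affine form forced by conditions~\ref{efficiency_char:gen:inc_y_i} and~\ref{efficiency_char:gen:dec_x_j}, the extrema pinned to $1$, $\beta$, and $0$ at the two corner points by condition~\ref{efficiency_char:gen:range}, and condition~\ref{efficiency_char:gen:same_lin_coeff} collapsing to a single proportionality factor $\lambda = \frac{\beta}{1-\beta}$ between the recovery and no-recovery coefficient vectors. The only difference is one of completeness: the paper writes out only the no-recovery case (citing Theorem~\ref{the:effic_charact} as ``analogous'' for the rest), whereas you work both cases in full.
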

\begin{proof}
This theorem is proven analogously to Theorem~\ref{the:effic_charact},
besides proving that the conditions of this theorem imply the formula also
for the case of no recovery, after having proven the rest. We prove this
part now. Conditions~\ref{efficiency_char:gen:inc_y_i}
and~\ref{efficiency_char:gen:dec_x_j} allow us assume that the utility
for no recovery looks as
$a' + \sum_{i = 1}^m {d_i' f(y_i)} - \sum_{j = m + 1}^{m + l} {b_j' f(x_j)}$.
First, the maximal possible value for no recovery, $\beta$, is obtained
by substituting $Y_i$ for each respective $y_i$ and zeros for each $x_j$.
This substitution yields $a' + \sum_{i = 1}^m {d_i' f(Y_i)} - 0 = \beta$,
implying that $a' = \beta - \sum_{i = 1}^m {d_i' f(Y_i)}$.

The least possible value for no recovery is zero, and it is attained when
each $x_j$ is $X_j$ an each $y_i$ is zero. This provides
$\beta - \sum_{i = 1}^m {d_i' f(Y_i)} + 0 - \sum_{j = m + 1}^{m + l} {b_j' f(X_j)} = 0$,
implying $\sum_{i = 1}^m {d_i' f(Y_i)} + \sum_{j = m + 1}^{m + l} {b_j' f(X_j)} = \beta$.
Condition~\ref{efficiency_char:gen:same_lin_coeff} implies that for any
two variables, w.l.o.g., for $y_i$ and $x_j$ there holds%
\footnote{Analogously for $y_{i_1}$ and $y_{i_2}$, or for $x_{j_1}$ and $x_{j_2}$.} 
$$\frac{d_i}{b_j} = \frac{d_i'}{b_j'} \iff \frac{d_i}{d_i'} = \frac{b_j}{b_j'},$$
where $d_i$ and $b_j$ are the coefficients for the case of recovery.
Assuming we have proven the formula for the case of recovery, we know
that $\sum_{i = 1}^m {d_i f(Y_i)} + \sum_{j = m + 1}^{m + l} {b_j f(X_j)} = 1 - \beta$.
Since the ratios of the coefficients remain the same, but the sum in the
case of no recovery is $\beta$ instead of $1 - \beta$, we need to
multiply the coefficients of the case of recovery by
$\frac{\beta}{1 - \beta}$. In particular, the above equation
$a' = \beta - \sum_{i = 1}^m {d_i' f(Y_i)}$ implies that
$\beta - \sum_{i = 1}^m {\frac{\beta}{1 - \beta} \frac{\alpha_i}{f(Y_i)} f(Y_i)}
= \beta - \sum_{i = 1}^m {\frac{\beta}{1 - \beta} {\alpha_i}}$, completing
the proof.
\end{proof}

We now remark on making this generalization even more general.
\begin{comment}
\begin{remark}
The assumption $f(0) = 0$ can be disposed with by adding a suitable
constant to the formulas for efficiency. Or by subtracting $f(0)$ at
some places? We need to really try, probably changing f() to f() - f(0)
at each occurrence.
%A shorter way
Define $f'(x) \defas f(x) - f(0)$. Now, \eqnsref{eq:efficiency_def:gen}
with $f'$ fulfills all the requirements of
Theorem~\ref{the:effic_charact:gen}. When we replace $f'$ with $f$, only
requirement~\ref{efficiency_char:gen:range} may stop holding. Concretely,
the values obtained when a recovery occurs are between $ $ and $ $,
and the values obtained when no recovery takes place lie between $ $ and $ $.
We can fix these by ...

The obtained efficiency function $E$ is the only one satisfying all the
requirements of Theorem~\ref{the:effic_charact:gen}, because if there had
been another function $E'$ satisfying all these requirements, then by
... 
and adding $f(0)$ we would have obtained another efficiency function for
$f'$ ???, contradictory to the proven uniqueness.

This new equation is the only possible for these requirements,
because 
\end{remark}
\end{comment}

\begin{remark}
Each variable $y_i$ and $x_j$ can be equipped with its own increasing
function, and this leaves the statements and their proofs unchanged.
\end{remark}

\subsection{Combining Equations}

The generalization we present now allows for several inputs, as
Section~\ref{Sec:model:gen:exp_eq} does, and furthermore, we can now
also allow for some revenues to have recovered while others may have not
recovered.

We use the following efficiency as a black box to define the efficiency
of the $i$th countermeasure, $E_i$.
\begin{eqnarray}
E(\text{recovered or not}, y, x) \defas \nonumber\\
\begin{cases}
\beta + \alpha \frac{f(y)}{f(Y)}
+ (1 - \beta - {\alpha}) \frac{f(X) - f(x)}{f(X)}\\
= 1 - \alpha + {\frac{\alpha}{f(Y)} f(y)}
- \frac{(1 - \beta - {\alpha}) }{f(X)} f(x)
  & \text{Recovered},\\

{\alpha (\frac{\beta}{1 - \beta}) \frac{f(y)}{f(Y)}}
+ (1 - \beta - {\alpha}) (\frac{\beta}{1 - \beta}) \frac{f(X) - f(x)}{f(X)}\\

= \beta - (\frac{\beta}{1 - \beta}) {\alpha}
+ {\alpha (\frac{\beta}{(1 - \beta) f(Y)}) f(y)} 
- (1 - \beta - {\alpha}) (\frac{\beta}{1 - \beta}) \frac{1}{f(X)} f(x)
	& \text{otherwise}.
\end{cases}
\label{eq:efficiency_def:base_plus_min}
\end{eqnarray}

The total efficiency is then defined
as follows:
\begin{eqnarray}
E \defas \sum_{i = 1}^n {\gamma_i E_i},
\label{eq:efficiency_def_gen}
\end{eqnarray}
where the nonnegative parameter $\gamma_i$ describes the importance of $i$th
revenue. By taking normalized $\gamma_i$s, such that the combination is
convex, meaning that $\sum_{i = 1}^n {\gamma_i} = 1$, we ensure that
$E$ is in $[0, 1]$, because all the $E_i$s are there.

%connecting the two 
Assuming that either all the revenues recover or not, as required by the
previous generalization, a natural question is about the connection
between this and the previous generalization of
\eqnsref{eq:efficiency_def}.
In general, the two generalizations are not equivalent, as the following
example demonstrates. The gist of this example is that linear combination
treats the ratios in the recovery and the non-recovery cases differently.
\begin{example}
For $i = 1, 2$, let the efficiency of the $i$th countermeasure be 
\begin{eqnarray*}
1 + \frac{\alpha_i}{Y} y - \frac{1 - \beta_i - \alpha_i}{X} x\\
\beta_i + (\frac{\beta_i}{1 - \beta_i}) \frac{\alpha_i}{Y} y - (\frac{\beta_i}{1 - \beta_i}) \frac{1 - \beta_i - \alpha_i}{X} x,
\end{eqnarray*}
for the case of recovery and no recovery, respectively.
Then, in the combined efficiency of the two countermeasures,
the ratio between the coefficients of $y$ and $x$ in the case of
recovery is 
\begin{equation}
- \frac{\gamma_1 \frac{\alpha_1}{Y} + \gamma_2 \frac{\alpha_2}{Y}}
{\gamma_1 \frac{1 - \beta_1 - \alpha_1}{X} + \gamma_2 \frac{1 - \beta_2 - \alpha_2}{X}},
\label{eq:rat_1_2_recovery}
\end{equation}
while in the case of no recovery, the ratio is
\begin{equation}
- \frac{\gamma_1 (\frac{\beta_1}{1 - \beta_1}) \frac{\alpha_1}{Y} + \gamma_2 (\frac{\beta_2}{1 - \beta_2}) \frac{\alpha_2}{Y}}
{\gamma_1 (\frac{\beta_1}{1 - \beta_1}) \frac{1 - \beta_1 - \alpha_1}{X} + \gamma_2 (\frac{\beta_2}{1 - \beta_2}) \frac{1 - \beta_2 - \alpha_2}{X}}.
\label{eq:rat_1_2_no_recovery}
\end{equation}
These are, generally speaking, not equal: for instance, by substituting
$\gamma_1 = \gamma_2 = 0.5$, $\alpha_1 = \alpha_2 = 0.5$,
$\beta_1 = 0.5, \beta_2 = 0.4$ and $Y = X$,
\eqnsref{eq:rat_1_2_recovery} gives $- 10$, but
\eqnsref{eq:rat_1_2_no_recovery} yields $- 12.5$. These ratios are
not equal, thereby violating
condition~\ref{efficiency_char:gen:same_lin_coeff} of
Theorem~\ref{the:effic_charact:gen}.
Therefore, expanding equations is not equivalent to combining equations.
\end{example}

However, when the system recovers, the two generalizations are equivalent,
as we prove next.
\begin{proposition}\label{prop:equiv_if_recov}
If all the system recovers, then expanding equations is equivalent
to combining equations.
\end{proposition}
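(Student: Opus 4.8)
The plan is to show that when every countermeasure recovers, the combined efficiency from \eqnsref{eq:efficiency_def_gen} can be rewritten in exactly the form of the expanded equation \eqnsref{eq:efficiency_def:gen} (restricted to its ``Recovered'' branch). Since the recovery case is the only one in play, I do not need to worry about the $\frac{\beta}{1-\beta}$ normalization that distinguishes the two branches — and that normalization was precisely what broke the equivalence in the preceding example. **First I would** write out the combined efficiency explicitly. Assuming each $E_i$ is in its recovered branch, \eqnsref{eq:efficiency_def:base_plus_min} gives
\begin{equation*}
E_i = 1 - \alpha_i + \frac{\alpha_i}{f(Y_i)} f(y_i) - \frac{1 - \beta_i - \alpha_i}{f(X_i)} f(x_i),
\end{equation*}
where I allow each countermeasure its own inputs $y_i, x_i$ with their own bounds. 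Then $E = \sum_{i=1}^n \gamma_i E_i$ with $\sum_i \gamma_i = 1$.

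**Next I would** verify that this $E$, viewed as a function of all the input factors $\{f(y_i)\}$ and $\{f(x_i)\}$, satisfies conditions~\ref{efficiency_char:gen:inc_y_i} through~\ref{efficiency_char:gen:range} of Theorem~\ref{the:effic_charact:gen} for the recovered case. Linearity in each $f(y_i)$ and $f(x_i)$ is immediate, with coefficient $\gamma_i \frac{\alpha_i}{f(Y_i)} > 0$ for the increasing factors and $-\gamma_i \frac{1 - \beta_i - \alpha_i}{f(X_i)} < 0$ for the decreasing ones, which gives conditions~\ref{efficiency_char:gen:inc_y_i} and~\ref{efficiency_char:gen:dec_x_j}. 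Condition~\ref{efficiency_char:gen:same_lin_coeff} — that the coefficient ratios between any two variables agree across the recovery and non-recovery branches — holds \emph{vacuously} here, since only the recovery branch is present; this is the crux of why the proposition restricts to full recovery. For condition~\ref{efficiency_char:gen:range}, I would check that substituting the minimizing inputs (each $y_i = 0$, each $x_i = X_i$) gives the combined value $\sum_i \gamma_i \beta_i$ and the maximizing inputs (each $y_i = Y_i$, each $x_i = 0$) gives $\sum_i \gamma_i = 1$, so $E$ ranges over $[\sum_i \gamma_i \beta_i, 1]$; identifying the combined division point as $\beta \defas \sum_i \gamma_i \beta_i$ matches the expanded equation's range $[\beta, 1]$.

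**The main obstacle** is a bookkeeping one rather than a conceptual one: I must match the combined coefficients to the expanded-equation parameters $\alpha_1, \ldots, \alpha_{m+l-1}$ and $\beta$, and confirm the induced parameters satisfy the constraint $\sum_k \alpha_k \in [0, 1-\beta]$ required by \eqnsref{eq:efficiency_def:gen}. Concretely, the increasing-factor weights become $\tilde\alpha_i = \gamma_i \alpha_i$ and the decreasing-factor weights become $\gamma_i(1 - \beta_i - \alpha_i)$; I would need to confirm their total equals $1 - \beta = 1 - \sum_i \gamma_i \beta_i$, which follows from $\sum_i \gamma_i(\alpha_i + 1 - \beta_i - \alpha_i) = \sum_i \gamma_i(1 - \beta_i)$. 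Once the parameter identification is in place, the expanded formula and the combined formula are literally the same linear function on the recovery branch, and since Theorem~\ref{the:effic_charact:gen} guarantees uniqueness, the two generalizations coincide whenever the system recovers.
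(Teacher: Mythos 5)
Your computation establishes only one of the two inclusions that the proposition asserts, namely that combining yields nothing outside the expanded form: a convex combination $\sum_{i} \gamma_i E_i$ of recovered-branch basic equations is again an equation of the form of \eqnsref{eq:efficiency_def:gen}, with induced division point $\beta = \sum_i \gamma_i \beta_i$, increasing-factor weights $\gamma_i \alpha_i$ and decreasing-factor weights $\gamma_i (1 - \beta_i - \alpha_i)$, whose total is $\sum_i \gamma_i (1 - \beta_i) = 1 - \beta$. This half coincides with the second part of the paper's proof (phrased there as ``expanding includes combining''), and your bookkeeping is correct --- indeed more explicit than the paper's, since you identify the induced parameters and verify the constraint on $\sum_k \alpha_k$ and the range $[\beta, 1]$.

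What is missing is the converse inclusion: that every equation of the form \eqnsref{eq:efficiency_def:gen}, with arbitrary $m$, $l$, $\beta$ and $\alpha_k$'s, can be realized as a combination $\sum_i \gamma_i E_i$ of the two-variable black boxes of \eqnsref{eq:efficiency_def:base_plus_min}. Without it you have shown that combining is at most as expressive as expanding, not that it is at least as expressive, so ``equivalent'' is not established. The paper handles this direction by decomposing an expanded equation into single-variable building blocks $\beta + (1 - \beta)\frac{f(y_i)}{f(Y_i)}$ and $\beta + (1 - \beta)\frac{f(X_j) - f(x_j)}{f(X_j)}$ (degenerate instances of \eqnsref{eq:efficiency_def:base_plus_min}) combined with weights proportional to the $\alpha_k$'s. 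Your appeal to the uniqueness in Theorem~\ref{the:effic_charact:gen} cannot substitute for this step: uniqueness says that any function satisfying the axioms must be of the expanded form, which is again the inclusion you already proved; it says nothing about whether a given expanded equation is expressible as a convex combination of the basic equations.
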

\begin{proof}
Combining equations includes expanding an equation, because any
equation of the form \eqnsref{eq:efficiency_def:gen} can be obtained by
combining equations $\beta + (1 - \beta) \frac{f(y_i)}{f(Y_i)}$ and
$\beta + (1 - \beta) \frac{f(X_j) - f(x_j)}{f(X_j)}$ with the coefficients
$\gamma_i \defas \alpha_i, i = 1, \ldots, m + l - 1$.

In order to show that expanding includes combining,
we prove that combining equations of the form
\eqnsref{eq:efficiency_def:gen},
which includes \eqnsref{eq:efficiency_def:base_plus_min}, using
\eqnsref{eq:efficiency_def_gen} yet again yields an equation of the form
\eqnsref{eq:efficiency_def:gen}.
Consider the part of \eqnsref{eq:efficiency_def:gen} that refers to the
case when a recovery is achieved, and look at the expression before
the equality sign. The expression is $\beta$ plus a linear combination
of terms such as $\frac{f(y_i)}{f(Y_i)}$ and
$\frac{f(X_j) - f(x_j)}{f(X_j)}$, such that the sum of the coefficients
of these terms is $1 - \beta$.
Therefore, combining $n$ such equations, the $i$th equation having
$\beta_i$, according to \eqnsref{eq:efficiency_def_gen} will yield 
$\sum_{i = 1}^n {\gamma_i \beta_i}$ plus a linear combination of
terms such as $\frac{f(y_i)}{f(Y_i)}$ and
$\frac{f(X_j) - f(x_j)}{f(X_j)}$, such that the sum of the coefficients
of these terms is
$\sum_{i = 1}^n {\gamma_i (1 - \beta_1)} = 1 - \sum_{i = 1}^n {\gamma_i \beta_1}$,
which is exactly an expression of the type of
\eqnsref{eq:efficiency_def:gen} for the case of recovery.
\end{proof}

\section{Conclusion}

%summary
We first presented a basic efficiency model where we had two parameters,
characterised it axiomatically, and subsequently generalized it
in two natural ways. Then, we compare these two ways, showing that they
are generally not the same, but if a recovery takes place, then they
are the same. Basically, the characterisations and the partial equivalence
of the natural generalisations hint that there may be only one natural way
to approach efficiency.

%future work
We may look at another axioms and at other generalisations. It would be
nice to axiomatise the way the generalisation stands with respect to
the original formula.
We may also consider eliciting the paramters of the formulas.
While using the formulas in practice, we may want to be able to
recalculate the results after an update about the values of some
parameters arrive. In order to be able to perform that without storing
extra information, we may need to look for other formulas.

%conclusions
To conclude, we provide a natural definition of efficiency, the only
definition that fulfils a natural set of axioms.

\paragraph*{Acknowledgments}

This research is funded by the Dutch Science Foundation project
SARNET (grant no: CYBSEC.14.003 / 618.001.016)

%%%%%%%%%%%%%%%%%%%%%%%%%%%%%%%%%%%%%%%%%%%%%%%%%%%%%%%%%%%%%%%%% 
%%%%%%%%%%%%%%%%%%%%%%%%%%%%%%%%%%%%%%%%%%%%%%%%%%%%%%%%%%%%%%%%% 

%\newpage
\bibliographystyle{styles/splncs03}
\bibliography{library}

%\newpage
%\appendix

\end{document}